\begin{document}
%
% paper title
% Titles are generally capitalized except for words such as a, an, and, as,
% at, but, by, for, in, nor, of, on, or, the, to and up, which are usually
% not capitalized unless they are the first or last word of the title.
% Linebreaks \\ can be used within to get better formatting as desired.
% Do not put math or special symbols in the title.
\title{Multi-User Cooperative Computation Framework Based on Bertrand Game}
%
%
% author names and IEEE memberships
% note positions of commas and nonbreaking spaces ( ~ ) LaTeX will not break
% a structure at a ~ so this keeps an author's name from being broken across
% two lines.
% use \thanks{} to gain access to the first footnote area
% a separate \thanks must be used for each paragraph as LaTeX2e's \thanks
% was not built to handle multiple paragraphs
%
\author{Nan~Zhang, Guopeng~Zhang, Kezhi~Wang and Kun~Yang% <-this % stops a space
\thanks{Nan Zhang and Guopeng Zhang are with the School of Computer
Science and Technology, China University of Mining and Technology, Xuzhou
221116, China (e-mail: NanZhang@cumt.edu.cn; gpzhang@cumt.edu.cn).}% <-this % stops a space
\thanks{Kezhi Wang is with the Department of Computer and Information
Science, Northumbria University, Newcastle NE2 1XE, U.K. (e-mail:
kezhi.wang@northumbria.ac.uk).}% <-this % stops a space
\thanks{Kun Yang is with School of Information and Communication Engineering, University of Electronic Science and Technology of China, Chengdu, China, and also affiliated with School of Computer Technology and Engineering, Changchun Institute of Technology, Changchun, China.
kunyang@uestc.edu.cn}

\thanks{ACK: Natural Science Foundation of China (Grant No. 61620106011, U1705263), UESTC Yangtze Delta Region Research Grant (Grant No.: 2020D002).} }

% note the % following the last \IEEEmembership and also \thanks - 
% these prevent an unwanted space from occurring between the last author name
% and the end of the author line. i.e., if you had this:
% 
% \author{....lastname \thanks{...} \thanks{...} }
%                     ^------------^------------^----Do not want these spaces!
%
% a space would be appended to the last name and could cause every name on that
% line to be shifted left slightly. This is one of those "LaTeX things". For
% instance, "\textbf{A} \textbf{B}" will typeset as "A B" not "AB". To get
% "AB" then you have to do: "\textbf{A}\textbf{B}"
% \thanks is no different in this regard, so shield the last } of each \thanks
% that ends a line with a % and do not let a space in before the next \thanks.
% Spaces after \IEEEmembership other than the last one are OK (and needed) as
% you are supposed to have spaces between the names. For what it is worth,
% this is a minor point as most people would not even notice if the said evil
% space somehow managed to creep in.

% The paper headers

% make the title area
\maketitle

\begin{abstract}
In this paper, a multi-user cooperative computing framework is applied to enable mobile users to utilize available computing resources from other neighboring users via direct communication links. An incentive scheme based on Bertrand game is proposed for the user to determine \textit{who} and \textit{how} to cooperate. We model the resource demand users as \textit{buyers} who aim to use minimal payments to maximize energy savings, whereas resource supply users as \textit{sellers} who aim to earn payments for their computing resource provision. A Bertrand game against \textit{buyer's market} is formulated. When the users have \textit{complete information} of their opponents, the Nash equilibrium (NE) of the game is obtained in closed form, while in the case of \textit{incomplete information}, a distributed iterative algorithm is proposed to find the NE. The simulation results verify the effectiveness of the proposed scheme.
\end{abstract}

% Note that keywords are not normally used for peerreview papers.
\begin{IEEEkeywords}
Cooperative computation framework, Task offloading, Resource allocation, Bertrand game, Nash equilibrium.
\end{IEEEkeywords}

\IEEEpeerreviewmaketitle

\section{Introduction}
% The very first letter is a 2 line initial drop letter followed
% by the rest of the first word in caps.
% 
% form to use if the first word consists of a single letter:
% \IEEEPARstart{A}{demo} file is ....
% 
% form to use if you need the single drop letter followed by
% normal text (unknown if ever used by the IEEE):
% \IEEEPARstart{A}{}demo file is ....
% 
% Some journals put the first two words in caps:
% \IEEEPARstart{T}{his demo} file is ....
% 
% Here we have the typical use of a "T" for an initial drop letter
% and "HIS" in caps to complete the first word.
\IEEEPARstart{W}{ith} the development of Artificial Intelligence (AI) applications, Virtual Reality (VR) services and others, the computational complexity of mobile applications is increasing
rapidly. Although cloud computing can alleviate the shortage
of computing resources for user equipments (UEs), it may
incur longer transmission delay. To reduce latency and increase user experience, the framework of multi-user cooperative computing has been proposed to enable UEs to offload their computational-intensive tasks to other neighbouring UEs via direct communication links \cite{7996590}.

In the cooperative computing framework, UEs are generally divided into resource demand UEs (DUs) and resource supply UEs (SUs). DUs have heavy computing tasks and may require other devices to assist, while SUs have available resources and may help others. In \cite{8644186}, the authors considered that a DU can partition a computing task into multiple parts and offload the task parts to a set of SUs. The purpose was to maximize the completion rate. The authors in \cite{8779699} proposed an energy-efficient task offloading method for two cooperative UEs and the aim was to minimize the long-term energy consumption of both users. In \cite{7996590}, the authors proposed a hybrid task offloading framework for \textit{fog computing}, where UEs can choose flexibly from local computing, another device, or the central cloud. In \cite{8360883}, the authors proposed that a SU can make a decision on whether to provide computing service according to the social relationship with a DU.

The provision of computing services to others incurs resource consumption for \textit{rational} providers, who may be only willing to use resources for profit purposes. However, the above works either consider UEs to be fully cooperative \cite{7996590}-\cite{8779699}\cite{7161307} or the cooperation depends on their social relations \cite{8360883}. In this paper, we propose an incentive scheme based on Bertrand game \cite{10.2307/2555525} to address the above-mentioned issue. Bertrand game is widely used to analyze price competition behavior and product sales share of \textit{buyer’s market}. 
%The products produced by multiple sellers are homogeneous, and there is no formal or informal collusion between them. Thus, the sellers can only compete by choosing prices. In comparison to the 
Compared to existing works, the main contribution of this paper includes:

\begin{enumerate}
    \item By modeling DUs as \textit{resource buyers} and SUs as \textit{resource sellers}, a Bertrand game is proposed to solve the incentive problem in cooperative computing framework.
    \item When the states and actions of all users are observable (known as the \textit{complete information} hypothesis), the NE of the game is solved in closed-form. In the case of \textit{incomplete information}, an effective distributed iterative algorithm is proposed to search the NE of the game.  
\end{enumerate}

\section{System Model}
Consider that there are a set $\mathcal{M}=\left\{1,\cdots,M\right\}$ of $M$ UEs and each pair of the UEs can communicate with each other directly. A discrete time model is considered and let $t=1,2,\cdots$ denote the consecutive time slots. The duration of a time slot is denoted by $T$. In any slot $t$, UE $m$ $\left(\forall m\in\mathcal{M}\right)$ has a computing task to be executed, which is characterized by a tuple of two parameters $\varphi_m=\left(L_m,C_m\right)$, where $L_m$ (in Mb) is the amount of the task input data and $C_m$ (in CPU cycles/Mb) is the required number of CPU cycles to complete each megabit data \cite{8779699}. Let $f_m^\text{max}$ denote the maximum operating frequency of the CPU of UE $m$. In order to complete task $\varphi_m$ within a time slot, the CPU frequency is set to $f_m=\frac{C_m L_m}{T}$ ($0< f_m\leq f_m^\text{max}$) by UE $m$. Let $\kappa_m$ denote the effective capacitance coefficient of the CPU of UE $m$. The energy spent by UE $m$ to complete task $\varphi_m$ is given by \cite{8779699}

\begin{small}
\begin{equation}
    E_{m}^\text{exe}=\kappa_m f_{m}^2 C_m L_m=\frac{\kappa_m \left(C_m L_m\right)^3}{T^2 },\ \forall m\in\mathcal{M}.
    \label{beforeenergy}
\end{equation}
\end{small}

%Set $\mathcal{N}$ is to be determined in the proposed game. 

We consider the situation of \textit{buyer's market}. Let $m=0$ denote the unique DU in the system. The other UEs in set $\mathcal{M}-\left\{0\right\}$ are the potential SUs for the DU. We also define a new set $\mathcal{N}=\left\{1,\cdots,N\right\}\subseteq\mathcal{M}-\left\{0\right\}$ to represent the set of SUs that actually provide computing service to the DU. 

\subsection{Energy consumption of the DU}
%The D2C for the DU spans two consecutive slots as shown in Fig. 1. 
The energy consumption of the DU consists of the following two parts. The first part of energy is used to transmit the task data to the SUs in set $\mathcal{N}$ in slot $t-1$, and the second part of energy is used to execute the remaining task locally in slot $t$. Note that the above system settings allows task data transmission and task computing to be carried out in parallel.

In slot $t-1$, we assume that any SU $i$ in set $\mathcal{N}$ is allocated equal time for receiving the task data, which is given by
\begin{equation}
    t_n=T/\left|\mathcal{N}\right|,\ \forall n \in \mathcal{N}. 
\end{equation}

Let $p_0^n$ denote the transmit power of the DU for uploading task data to SU $n$. Let $g_0^n$ denote the channel gain from the DU to SU $n$. Then, the achievable data rate is given by
\begin{equation}
    r_0^n=B\log_2\left(1+\frac{p_0^n g_0^n}{\sigma^2}\right),\ \forall n \in \mathcal{N},
    \label{datarate}
\end{equation}
where $B$ is the channel bandwidth of the system, and $\sigma^2$ is the noise power at the receiver of the SU.

The unique DU in the system is denoted by $m=0$. Then $L_0$ denotes the amount of task input data of the DU. Let $l_0^n$ $\left(0\leq l_0^n\leq L_0\right)$ denote the amount of task data that the DU decides to offload to SU $n$. To ensure correct reception, condition $r_0^n t_n\geq l_0^n$ should be satisfied. By substituting eq. \eqref{datarate} into this condition, the lower bound of $p_0^n$ is obtained as
\begin{equation}
    p_0^n=\left(2^{\frac{l_0^n}{BT/\left|\mathcal{N}\right|} }-1\right) \frac{\sigma^2}{g_0^n},\ \forall n\in{\mathcal{N}}. 
\end{equation}
Then, the energy spent by the DU for task offloading in slot $t-1$ is given by

\begin{equation}
    E_0^{\text{off}}=\sum\nolimits_{n\in \mathcal{N}}p_0^n t_n.
\end{equation}

%Although part of the computing task in slot $t$ is offloaded to some SUs for remote execution, the DU is with heavy computing load and the released CPU resources are immediately used to handle new tasks. Hence, we a
Assume that the CPU of the DU always works at the highest frequency $f_0^\text{max}$, although the task may be offloaded, it still needs to complete other new tasks. Hence, the energy spent by the DU for executing the remaining task in slot $t$ is
\begin{equation}
    E_0^\text{com}=\kappa_0 (f_0^\text{max} )^2 C_0 \left(L_0-\sum\nolimits_{n\in \mathcal{N}}l_0^n\right).
\end{equation}

\subsection{Energy consumption of each SU}
The energy consumed by any SU $n$ in set $\mathcal{N}$ consists of the following two parts. The first part is used by SU $n$ for receiving the task data of the DU in slot $t-1$. Let $p_n^\text{rec}$ denote the power of the receiver circuit of SU $n$. This part of energy consumption is given by
\begin{equation}
    E_n^\text{rec}=p_n^\text{rec} t_n, \ \forall n \in \mathcal{N}.
\end{equation}

After decoding the task data of the DU, SU $n$ has to raise the CPU frequency in slot $t$ to complete its own task $\varphi_n$ and the DU's task of $l_0^n$ Mb within slot $t$. By using eq. \eqref{beforeenergy}, we can derive the energy consumption of SU $n$ in slot $t$ as
\begin{equation}
    E_n^\text{com}=\frac{\kappa_n C_n^3\left(L_n+l_0^n \right)^3}{T^2 }, \ \forall n \in \mathcal{N}. 
\end{equation}

Since the size of the execution result is small, the energy spent by SUs for feeding back the result is negligible.

\section{Game Formulation}
Our objective is to motivate \textit{rational} UEs to participate in cooperative computing framework. Therefore, the following questions should be answered \cite{4684603}: 1) \textit{who to cooperate}, i.e., to determine which SUs are in set $\mathcal{N}$, and 2) \textit{how to cooperate}, i.e., to determine how much resource a SU in set $\mathcal{N}$ provides to the DU and how much it can benefit from the provision. Next, we propose a Bertrand game based incentive scheme to address these issues. 

In the proposed game, the DU is modeled as a \textit{buyer} who aims to use minimal payment to maximize energy savings. Generally, the benefits of a player in a game is quantified by utilities. Let $q_n$ represent the energy pricing of SU $n$ for providing computing resources to perform 1 Mb computing tasks of the DU. According to \cite{10.2307/2555525}, the following function is used to quantify the utility of the DU.

\begin{align}
    U_0=\left(E_0^\text{exe}-E_0^\text{com}-E_0^\text{off}\right)-\sum\nolimits_{n\in \mathcal{N}}q_nl_0^n- \label{U0} \\ \notag \frac{1}{2}\left(\sum\nolimits_{n\in \mathcal{N}}\left(l_0^n\right)^2+2v\sum\nolimits_{n\neq k}l_0^nl_0^k\right),
\end{align}
where the \textit{first} term represents the energy saving for the DU when it performs cooperative computing (offloading partial computing task to the SUs) rather than performing the entire task locally, the \textit{second} term represents the total payment of the DU to all the selected cooperative SUs, and, thus the difference between the \textit{first} term (the income of the DU) and the \textit{second} term (the cost of the DU) just represents the profit that the DU can obtain under the proposed cooperative computing framework. Additionally, the utility function also takes the resource substitutability (RS) into account in the \textit{third} term as in \cite{10.2307/2555525}. RS is an ability of a resource demander to substitute one resource (hold by some oligopolies) with other resources (hold by other oligopolies) of similar functionality \cite{2019How}. Parameter $v\in[0,1]$ represents the RS. When $v=0$, there is no substitutability between resources sold by different oligopolies, when $v=1$, the resource sold by different oligopolies are completely homogeneous with each other. For example, when a DU is faced with multiple SUs to offload the computing task, if special hardware or software are needed to perform the task, the RS of different SUs is weak, namely $v$ tends to 0; otherwise, the computing task of the DU can be performed in any configured hardware or software system, namely, the computing resources of different SUs have a strong RS, so $v$ tends to 1. The RS has also been used by the authors in [7]  to address the spectrum allocation problem in multi-user cognitive radio networks.

The SUs in the game are modeled as \textit{sellers} who aim to not only earn the payment to offset the energy consumption for resource provision but also gain as much extra profits as possible. Then, the utility function for SU $n$ is defined as
\begin{equation}
    U_n=q_nl_0^n-\left(E_n^\text{rec} + \left(E_n^\text{com}-E_n^\text{exe}\right)\right),\ \forall n\in{\mathcal{N}},
\end{equation}
where the \textit{first} term represents the revenue that SU $n$ can receive from the DU, the \textit{second} term represents the extra energy consumed by SU $n$ when providing computing and communication resources to the DU rather than solely performing its own task, and thus, the difference between the \textit{first} term (the income of SU $n$) and the \textit{second} term (the cost of SU $n$) just represents the profit that SU $n$ can obtain under the proposed cooperative computing framework.

Let $\pi=\left(l_0^1,\cdots,l_0^{\left|\mathcal{N}\right|}\right)$ and $\rho=\left(q_1,\cdots,q_{\left|\mathcal{N}\right|}\right)$ denote the strategy profile of the DU and the SUs in the game, respectively. The DU and each of the SUs aim to maximize their utilities in the game by choosing the optimal strategy. Therefore, the objective of the DU is formulated as 
\begin{alignat}{3}
    &\max_{\pi \; } \;\; && U_0 \label{maxDU} \\
    &\mbox{s.t.} && 0\leq l_0^n\leq L_0, \; \forall n\in{\mathcal{N}}, & \tag{\ref{maxDU}.1} \label{positivel} \\
    &&& \sum\nolimits_{n\in \mathcal{N}}l_0^n \leq L_0, & \tag{\ref{maxDU}.2}  \label{sumconstraint}\\
    &&& p_0^n \leq P , \; \forall n\in{\mathcal{N}}, & \tag{\ref{maxDU}.3}  \label{pmax}
\end{alignat}
where constraint \eqref{sumconstraint} ensures that the amount of the offloaded data does not exceed the total amount of task data of the DU in any slot $t$, and constraint \eqref{pmax} ensures that the transmit power of the DU does not exceed the maximum allowable power $P$. The objective of SU $n$ is formulated as
\begin{alignat}{3}
    &\max_{q_n} \;\; && U_n, \; \forall n\in{{\mathcal{N}}} & \label{maxSU} \\
    &\mbox{s.t.} && q_n \geq 0, \; \forall n\in{\mathcal{N}}, & \tag{\ref{maxSU}.1}  \label{positiveq} \\
    &&& C_n(L_n+l_0^n)/T\leq f_n^\text{max}, \; \forall n\in{\mathcal{N}}, & \tag{\ref{maxSU}.2} \label{fmax}
\end{alignat}
where constraint \eqref{positiveq} ensures a positive unit-price, and constraint \eqref{fmax} is the maximum computational power constraint.

Obviously, the DU and the SUs have conflict objectives in the game. The objective is to find the NE of the game, at which no user can achieve better utility by unilaterally violating the NE strategy profile $\{\hat{\pi},\hat{ \rho}\}$.

\section{Solution of the game}
In the formulated Bertrand game, the amount of resource that the DU buys is determined by the pricing of the SUs, and vice versa. Additionally, the pricing strategy $q_n$ of SU $n$ is affected not only by its own available resource but also by the pricing of the other SUs in set $\mathcal{N}$, which is represented by $\rho_{-n}=\left(q_1,\cdots,q_{n-1},q_{n+1},\cdots,q_{\left|\mathcal{N}\right|}\right)$. Next, with the \textit{complete information} hypothesis, that is, the strategies of all the UEs are observable, we try to solve the NE strategy profile $\{\hat{\pi},\hat{\rho}\}$ of the game. The main methods are given below.
\begin{enumerate}
    \item Given the pricing strategy $\rho$ of the SUs, the DU can obtain the optimal strtegy $\hat{\pi}$ by solving problem \eqref{maxDU}.
    \item After obtaining $\hat{\pi}$, the SUs in set $\mathcal{N}$ can obtain their optimal pricing strategy $\hat{\rho}$ by solving problem \eqref{maxSU}.
\end{enumerate}

\subsection{Optimal Resource Purchase of the DU}
In this step, we solve problem \eqref{maxDU} by using given pricing $\rho$ of the SUs. Note that constraint \eqref{sumconstraint} is not considered in this step, so $\sum\nolimits_{n\in \mathcal{N}}l_0^n > L_0$ is allowed, which means that the DU can buy additional computing resources beyond the demand $L_0$ from the SUs. This constraint will be dealt with in Sec. IV. C, where the active SU set $\mathcal{N}$ is determined.

In order to obtain the analytic solution of problem \eqref{maxDU}, one can simplify eq. \eqref{U0} into
\begin{small}
\begin{align}
    U_0=-\frac{1}{2}\left(\sum\limits_{n\in \mathcal{N}}\left(\frac{H_2}{g_0^n}+1\right)\left(l_0^n\right)^2+2v\sum\nolimits_{n\neq k}l_0^nl_0^k\right)+ \notag\\ 
    \sum\nolimits_{n\in \mathcal{N}}\left(A-\frac{H_1}{g_0^n}-q_n\right)l_0^n,
\end{align}
\end{small}where $A=\kappa_0 (f_0^\text{max} )^2 C_0$, $H_1=\ln{2^{\frac{1}{BT/{\left|\mathcal{N}\right|}}}}\sigma^2T/{\left|\mathcal{N}\right|}$ and $H_2=\left(\ln{2^{\frac{1}{BT/{\left|\mathcal{N}\right|}}}}\right)^2\sigma^2T/{\left|\mathcal{N}\right|}$ are constants. The derivation is detailed in Appendix A.

Next, one can differentiate $U_0$ with respect to $l_0^n$ and solve equation $\frac{\partial U_0}{\partial l_0^n}=0$. Then, the optimal solution of problem \eqref{maxDU} is obtained as
\begin{align}
  {\hat{l_0^{n}}}=\left[\alpha_n-\beta_nq_n\right]_0^{Q_n^1},\ \forall n\in{{\mathcal{N}}},
  \label{ln}
\end{align}
\begin{small}
\begin{align}
&\text{where} \ Q_n^1=\min\left(L_0,\log_2\left(\frac{Pg_0^n}{\sigma_0^2}+1\right)\frac{BT}{{\left|\mathcal{N}\right|}}\right), \notag \\
  &\alpha_n=\frac{A-\frac{H_1}{g_0^n}\left(v\left(K-\frac{1}{{H_2}/{g_0^n}-v+1}\right)+1\right)+v\sum\nolimits_{n\neq k}\left(\frac{{H_1}/{g_0^k}+q_k}{{H_2}/{g_0^k}-v+1}\right)}{\left({H_2}/{g_0^n}-v+1\right)\left(vK+1\right)}, \notag \\
  &\beta_n=\frac{v\left(K-\frac{1}{{H_2}/{g_0^n}-v+1}\right)+1}{\left({H_2}/{g_0^n}-v+1\right)\left(vK+1\right)} \ \text{and} \ K=\sum_{n\in \mathcal{N}}\frac{1}{{H_2}/{g_0^n}-v+1} \notag
\end{align}
\end{small}are constants when all $q_k$ are known for $\forall k\in{\mathcal{N}}$ and $k\neq n$, and $\left[x\right]_a^b=\max\left(\min(x,b),\ a\right)$.

\subsection{Optimal Pricing Strategies of the SUs}

By substituting the solution of problem \eqref{maxDU} 
given in eq. \eqref{ln} into problem \eqref{maxSU}, problem \eqref{maxSU} is rewritten as

\begin{alignat}{3}
    &\max_{q_n} \;\;  U_n=q_n{\hat{l_0^{n}}}-p_n^\text{rec} t_n-\frac{\kappa_n C_n^3 \left((L_n+{\hat{l_0^{n}}})^3-L_0^3\right)}{T^2},\notag \\
    &\forall n\in{\mathcal{N}}  \label{maxSU2} \\
    &\mbox{s.t.}  \;\; q_n \geq 0, \; \forall n\in{\mathcal{N}},  \tag{\ref{maxSU2}.1}  \\
    &  \;\; \; \; C_n(L_n+{\hat{l_0^{n}}})/T\leq f_n^\text{max}, \; \forall n\in{\mathcal{N}}.  \tag{\ref{maxSU2}.2} \label{fmax2}
\end{alignat}

By combining eq. \eqref{ln} and constraint \eqref{fmax2}, we can get the more accurate range of ${\hat{l_0^{n}}}$ as 
\begin{align}
  {\hat{l_0^{n}}}=\left[\alpha_n-\beta_nq_n\right]_0^{{Q_n}},  \label{opln}
\end{align}
where $Q_n=\min\left(Q_n^1,Q_n^2\right) $, and $Q_n^2=\frac{Tf_n^\text{max}}{C_n}-L_n$.
By using eq. \eqref{opln}, the pricing range of SU $n$ is obtained as
\begin{equation}
    q_n\in\left[\frac{\alpha_n-Q_n}{\beta_n},\frac{\alpha_n}{\beta_n}\right],
    \label{pricerange}
\end{equation}
which means that SU $n$ cannot benefit by choosing prices higher than $\frac{\alpha_n}{\beta_n}$ or lower than $\frac{\alpha_n-Q_n}{\beta_n}$. Next, in order to solve problem \eqref{maxSU}, we give the following Lemma.
\newtheorem{theorem}{Lemma}
\begin{theorem}
        When $\hat{l_0^n}=\alpha_n-\beta_nq_n$ and $q_n$ satisfies condition \eqref{pricerange}, $U_n$ is concave with respect to $q_n$.
        \label{concave}
\end{theorem}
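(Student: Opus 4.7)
The plan is to establish concavity by substituting the linear expression $\hat{l_0^n}=\alpha_n-\beta_n q_n$ directly into the objective of \eqref{maxSU2} and then computing $\partial^2 U_n/\partial q_n^2$ explicitly. After substitution, $U_n$ becomes a cubic polynomial in $q_n$: the revenue term $q_n(\alpha_n-\beta_n q_n)$ contributes a quadratic in $q_n$ with leading coefficient $-\beta_n$, the receiving energy $p_n^{\text{rec}}t_n$ is a constant, and the computation energy $\kappa_n C_n^3(L_n+\hat{l_0^n})^3/T^2$ is cubic in $q_n$ through the factor $(L_n+\alpha_n-\beta_n q_n)^3$.

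First I would differentiate once and twice. The revenue term contributes $-2\beta_n$ to the second derivative, while the cubic computation term contributes $-6\kappa_n C_n^3 \beta_n^2 (L_n+\alpha_n-\beta_n q_n)/T^2$ after a routine chain-rule calculation (carrying the minus sign from $-\beta_n$ squared inside a negative leading coefficient). Concavity therefore reduces to showing
\begin{equation}
-2\beta_n-\frac{6\kappa_n C_n^3 \beta_n^2 (L_n+\alpha_n-\beta_n q_n)}{T^2}<0. \notag
\end{equation}

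The next step is to check the sign of each factor. The quantity $L_n+\alpha_n-\beta_n q_n = L_n+\hat{l_0^n}$ is non-negative on the feasible pricing range \eqref{pricerange}, since the upper price $\alpha_n/\beta_n$ is exactly the threshold at which $\hat{l_0^n}$ hits zero, while $L_n\geq 0$ by definition. So the remaining obstacle, and the main technical step, is verifying that $\beta_n>0$. This I would argue directly from the explicit formula: under the standing assumption $v\in[0,1]$ and using $H_2>0$, each denominator $H_2/g_0^k-v+1$ is strictly positive, so $K>0$ and $K-1/(H_2/g_0^n-v+1)=\sum_{k\neq n}1/(H_2/g_0^k-v+1)\geq 0$. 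The numerator of $\beta_n$ is therefore at least $1$, and the denominator $(H_2/g_0^n-v+1)(vK+1)$ is strictly positive, yielding $\beta_n>0$.

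With $\beta_n>0$ and $L_n+\hat{l_0^n}\geq 0$ established, both summands in the expression for $\partial^2 U_n/\partial q_n^2$ are non-positive and the first is strictly negative, so $\partial^2 U_n/\partial q_n^2<0$ throughout the interval \eqref{pricerange}. This proves that $U_n(q_n)$ is (strictly) concave on the feasible pricing interval, which is the statement of the Lemma. I expect the bookkeeping of signs inside $\beta_n$ to be the only real obstacle, because the rest is a mechanical second-derivative test; the cubic structure of the computation-energy term might look worrying at first glance, but the chain rule makes its contribution to the second derivative linear in $(L_n+\hat{l_0^n})$ and hence easy to sign.
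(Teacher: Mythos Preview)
Your argument is correct and follows essentially the same route as the paper's proof in Appendix~B: both substitute $\hat{l_0^n}=\alpha_n-\beta_n q_n$, compute the second derivative $\partial^2 U_n/\partial q_n^2=-2\beta_n-6F_n\beta_n^2(L_n+\hat{l_0^n})$ with $F_n=\kappa_n C_n^3/T^2$, and conclude from $\beta_n>0$ and $\hat{l_0^n}\ge 0$. The only difference is that you explicitly justify $\beta_n>0$ from its closed-form expression, whereas the paper simply asserts $-\beta_n<0$ without further comment; your version is therefore slightly more complete but not a different approach.
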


\begin{proof}
Please refer to Appendix B.
\end{proof}
According to \textbf{Lemma 1}, one can differentiate $U_n$ with respect to $q_n$ and let $\frac{\partial{U_n}}{\partial{q_n}}=0$. Then, the optimal solution to problem \eqref{maxSU2} is obtained as
\begin{equation}
\hat{q_n}=\left\{
\begin{array}{rcl}
\frac{\alpha_n-Q_n}{\beta_n}     &    & {0\leq \mu_n< \frac{\alpha_n-Q_n}{\beta_n}}\\
\mu_n  &      & {\frac{\alpha_n-Q_n}{\beta_n} \leq \mu_n \leq \frac{\alpha_n}{\beta_n}}\\
\frac{\alpha_n}{\beta_n}    &      & {\mu_n > \frac{\alpha_n}{\beta_n}}
\end{array} \right.
\end{equation}
where $\mu_n=\frac{-\sqrt{6F_nL_n\beta_n+3F_n\alpha_n\beta_n+1}+3L_nF_n\beta_n+3F_n\alpha_n\beta_n+1}{3F_n\beta_n^2}$, and $F_n=\frac{\kappa_n C_n^3 }{T^2}$. The derivation is in Appendix C.

\subsection{Algorithm Implementation}

In this section, we first consider the \textit{complete information} game (CIG), in which the channel state information (CSI) of the UEs, the pricing strategy of the SUs, and the resource purchase of the DU are all observable to the UEs. It is noted that the pricing $q_n$ of each SU $n$ is the function of $g_0^n$, i.e., the CSI between it and the DU. Therefore, the SU selection and scheduling algorithm requires to obtain the CSI via the dedicated feedback channel. According to the hierarchical structure of the game, we propose an iterative algorithm to find the NE strategy profile $\{\hat{\pi},\hat{\rho}\}$ of the CIG which is given in \textbf{Algorithm 1}. We use $i=1,2,\cdots$ to represent an index sequence of the number of iterations. In the $i$th iteration, we use $q_n[i]$, $\rho_{-n}[i]$ and $\pi[i]$ to represent the pricing of SU $n$, the pricing of the SUs in set $\mathcal{N}$ other than SU $n$, and the resource purchase of the DU, respectively. 

    \begin{algorithm}
\caption{Solving CIG}
\begin{small}
\begin{algorithmic}[1]
\STATE Let $i=1$. Initialize the pricing of the SUs in set $ \mathcal{N}$ as $\rho[i]$.
\STATE The SUs broadcast their CPU parameters $f_n^\text{max}$ to the DU.
\STATE With the given $\rho[i]$, the DU obtain the optimal resource purchase $\pi[i]$ by using eq. (16).
\REPEAT 
\STATE For each SU $n$ in set $\mathcal{N}$, after collecting $\pi[i]$ from the DU and $\rho_{-n}[i]$ from the other SUs in set $\mathcal{N}$, SU $n$ obtains its optimal pricing $q_n[i]$ by using eq. (18).
\STATE For the DU, after collecting $\rho[i]$ from the SUs in set $\mathcal{N}$, it obtains the optimal resource purchase $\pi[i]$ by using eq. (16).
\STATE Each SU can obtain the gradient of its utility $\nabla U_n[i](q_n[i])$ 
\STATE Update $i=i+1.$
\UNTIL $\left\|\nabla U_n[i](q_n[i])\right\| \leq \epsilon \left\|\nabla U_n[i-1](q_n[i-1])\right\|, \ \forall n\in{\mathcal{N}} $
\end{algorithmic}
\end{small}
\end{algorithm}

Considering the system is composed of one DU and two SUs, the convergence condition of \textbf{Algorithm 1} is analyzed in the following \textbf{Lemma 2}.

\begin{theorem}
\textbf{Algorithm 1} can converge to a stable point $\left\{\pi[i]=\hat{\pi},\rho[i]=\hat{\rho}\right\}$ as the number of iterations $i$ increases.
\end{theorem}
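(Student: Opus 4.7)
The plan is to view \textbf{Algorithm 1} as a Gauss--Seidel best-response dynamic on the two SU prices, noting that the DU's reply \eqref{opln} is a continuous, piecewise-affine function of $\rho$; convergence of $\{\rho[i]\}$ therefore automatically implies convergence of $\{\pi[i]\}$. It thus suffices to establish convergence of the induced two-dimensional price iteration to a fixed point.

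First I would invoke \textbf{Lemma 1} to conclude that, for any fixed competitor price $q_k$, each SU's best response $\hat{q}_n$ is single-valued and continuous in $q_k$: strict concavity of $U_n$ on the interval \eqref{pricerange} means the interior optimum $\mu_n$, or the appropriate endpoint, depends continuously on $\alpha_n$, which is affine in $q_k$. Thus the joint best-response map $T:(q_1,q_2)\mapsto(\hat{q}_1(q_2),\hat{q}_2(q_1))$ is well-defined and continuous on the compact price box $\mathcal{Q}$ carved out by \eqref{pricerange} and \eqref{fmax2}, and Brouwer already yields at least one fixed point.

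To upgrade existence to iterative convergence, I would show that $T$ is a contraction on $\mathcal{Q}$. Its Jacobian has zero diagonal, since SU $n$'s best response is independent of its own price, so the operator norm is governed by the cross terms $\partial\hat{q}_n/\partial q_k=(\partial\mu_n/\partial\alpha_n)(\partial\alpha_n/\partial q_k)$. Differentiating the closed form of $\mu_n$ yields $\partial\mu_n/\partial\alpha_n=\beta_n^{-1}-(2\beta_n\sqrt{6F_nL_n\beta_n+3F_n\alpha_n\beta_n+1})^{-1}$, while the explicit expression for $\alpha_n$ shows that $\partial\alpha_n/\partial q_k$ is proportional to the substitutability parameter $v$ and vanishes as $v\to 0$.

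The main obstacle is bounding the product uniformly on $\mathcal{Q}$: once $\|J_T\|_\infty<1$ is established, Banach's fixed-point theorem delivers convergence of $\rho[i]$ to the unique NE $\hat{\rho}$, and \eqref{opln} then gives $\pi[i]\to\hat{\pi}$. If a clean uniform bound fails near $v=1$, I would fall back on the supermodular structure of Bertrand competition with substitutes: Topkis/Vives monotone-convergence arguments ensure that iterates started from the two corners of $\mathcal{Q}$ sandwich the sequence and converge to the extremal equilibria, while the strict concavity of \textbf{Lemma 1} forces these to coincide, yielding the claimed stable point.
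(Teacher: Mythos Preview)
Your approach is essentially the same as the paper's: both analyze the Jacobian of the two-dimensional best-response map, observe that the diagonal vanishes because $\hat q_n$ does not depend on its own current value, and then bound the off-diagonal entries $\partial \hat q_n/\partial q_k$ to conclude that the spectral radius (equivalently, the $\|\cdot\|_\infty$ norm here) is below $1$. The paper simply carries the computation through rather than leaving it as an ``obstacle'': writing $\partial \mu_n/\partial \alpha_n=\bigl(1-\tfrac{1}{2\sqrt{\zeta_n}}\bigr)/\beta_n$ with $\zeta_n=6F_nL_n\beta_n+3F_n\alpha_n\beta_n+1\ge 1$, and then exploiting the explicit form of $\alpha_n$ to get
\[
\frac{1}{\beta_n}\,\frac{\partial \alpha_n}{\partial q_k}
=\frac{\dfrac{v}{H_2/g_0^k-v+1}}{\dfrac{v}{H_2/g_0^k-v+1}+1},
\]
which is of the type $x/(x+1)<1$ for every $v\in[0,1]$. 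So the contraction bound holds uniformly, including at $v=1$, and your supermodular/Topkis fallback is not needed. Your Brouwer step and the explicit reduction ``$\rho[i]\to\hat\rho$ implies $\pi[i]\to\hat\pi$ via \eqref{opln}'' are cleaner than what the paper states, but the core mechanism is identical; the only real gap in your write-up is that you stop short of the one-line algebraic identity above that closes the argument.
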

\begin{proof}
Please refer to Appendix D.
\end{proof}

The convergence speed of \textbf{Algorithm 1} mainly depends on the choice of the convergence threshold $\epsilon$. Since $U_n(q_n)$ is strictly concave with respect to $q_n$, according to \cite{doi:10.1080/10556788.2016.1278445}, the general upper bound on the number of iterations of \textbf{Algorithm 1} to reach a certain convergence threshold $\epsilon$ is $\mathcal{O}\left(\log(1/\epsilon)\right)$. For example, when the initial price is chosen arbitrarily from the feasible domain and the threshold $\epsilon$ is set to $10^{-3}$, \textbf{Algorithm 1} converges to a stable point after 10 iterations. It agrees to the simulation results as shown in Fig. 1. 
In addition to the CSI, the implementation of \textbf{Algorithm 1} also requires the DU and SUs to exchange $\left\{\pi[i],\rho[i]\right\}$ in any $i$th iteration. However, this information is private and difficult to obtain in practical application. Next, we consider a more realistic \textit{incomplete information} game (ICIG), in which the only information available for any SU $n$ to make decision is $l_0^n\left(i \right)$, that is the amount of resource purchased by the DU from SU $n$ in any $i$th iteration.

Next, we propose a distributed iterative algorithm based on projected gradient descent (PGD) \cite{web} to find the NE of the ICIG. Firstly, we define
\begin{small}
\begin{align}
    \frac{\partial{U_n}}{\partial{q_n}}\approx\frac{U_n\left(q_n[i]+\delta\right)-U_n\left(q_n[i]-\delta\right)}{2\delta}, \ \forall n\in\mathcal{N},
\end{align}
\end{small}where $\delta$ is a sufficiently small positive number (e.g., $\delta=10^{-5}$). According to the rules of PGD, any SU $n$ should adjust its strategy in the direction that maximizes its own utility but not beyond the feasible domain. According to constraint (12.1), we define the feasible domain of the pricing strategies of the SUs as $\mathcal{X}=\left\{x|x\geq0\right\}$. Then, the price update strategy for SU $n$ in any $i$th iteration is designed as
\begin{small}
\begin{align}
    q_n[i+1]=\mathop{\arg\min}_{x\in\mathcal{X}} \left\|x-\left(q_n[i]+a_n\left(\frac{\partial{U_n}}{\partial{q_n}}\right)\right)\right\|_2^2, \ \forall n\in\mathcal{N},
    \label{GDM}
\end{align}
\end{small}where $a_n$ is the adjustment speed (i.e., learning rate). The convergence of the PGD is analyzed in \cite{web}, thus omitted here.

The last issue to be addressed is to apply constraint \eqref{sumconstraint} to the game, that is, to determine which SUs are in set $\mathcal{N}$ and solve the problem of \textit{who to cooperate}. One simple and straightforward approach is to initialize $\mathcal{N}=\mathcal{M}-\left\{0\right\}$. After performing the game, one can make the following choices according to the obtained NE. If the NE satisfies constraint \eqref{sumconstraint} and $l_0^n>0$ for $\forall n \in \mathcal{N}$, the algorithm terminates and the NE is taken as the final solution of the game. Otherwise, the SU with the highest price is removed from set $\mathcal{N}$, and the game is performed again until constraint \eqref{sumconstraint} is satisfied or set $\mathcal{N}$ is empty. Since the the elements of the potential SU set $\mathcal{M}-\left\{0\right\}$ is limited, this algorithm is bound to terminates. The detail of the algorithm is given in \textbf{Algorithm 2}.  

\begin{algorithm}
    \renewcommand{\algorithmicrequire}{\textbf{Input:}}
    \renewcommand{\algorithmicensure}{\textbf{Output:}}
    \caption{SU selection algorithm}
    \label{alg.remove}
    \begin{small}
  \begin{algorithmic}[2]
    
    \REQUIRE The potential SUs set $\mathcal{M}-\left\{0\right\}$ for the DU.
    
    \STATE Initialization: Let $\mathcal{N}=\mathcal{M}-\left\{0\right\}$.
		\REPEAT
		\STATE Perform the game and obtain the NE $\{\hat{\pi},\hat{\rho}\}$ of the game.
		\STATE Remove any SU $n$ with $\hat{l_0^n}=0$ from set $\mathcal{N}$.
		 \IF{$\sum\nolimits_{n\in \mathcal{N}}l_0^n > L_0$}
		    \STATE  Remove the SU with the highest price from set $\mathcal{N}.$
		\ENDIF
		\UNTIL $\sum\nolimits_{n\in \mathcal{N}}l_0^n \leq L_0$ and $l_0^n>0$ for $\forall n \in \mathcal{N}$, or $\mathcal{N}=\varnothing.$
		\ENSURE The active SU set $\mathcal{N}$ for the DU.
	
  \end{algorithmic}
  \end{small}
\end{algorithm}

\section{Simulation results}
In the simulations, the system parameters are set as below. The duration of a time slot is $T=0.2$ s. The system bandwidth is $B=1$ MHz. All the UEs are with the same switch capacitance coefficient $\kappa_m =10^{-28}$. The number of CPU cycles required to execute megabit data is $C_m=8\times10^8$. To complete the computing task in slot $t$, the maximum CPU frequencies allocated by the DU and SU $n$ are $f_0^\text{max}=2.4$ GHz and $f_n^\text{max}=1.5$ GHz, respectively. The maximum transmit power of the DU is $P=0.1$ W. The power of the receiving circuit of SU $n$ is $p_n^\text{rec}=0.01$ W. The path loss gain is set to $0.001/d^3$ (where $d$ is the
distance between the transmitter and the receiver (in meters)). The noise power at the receiver of a SU is $\sigma^2 = 10^{-9}$. The substitutability factor for the DU's utility function is set to $v=0.5$.

To testify the convergence of the proposed algorithms, we place one DU at coordinate $(0,0)$, and two SUs at coordinates $(-20,20)$ and $(20,20)$, respectively. The amount of task data of the DU and the SUs are set to $L_0=0.6$ Mb, $L_1=0.15 $Mb, and $L_2=0$ Mb, respectively. 
Fig. 1 shows the convergence of the price of the SUs with the increase of iteration numbers.

\vspace{-5 mm}
\begin{figure}[H]
  \centering
  \includegraphics[scale=.5]{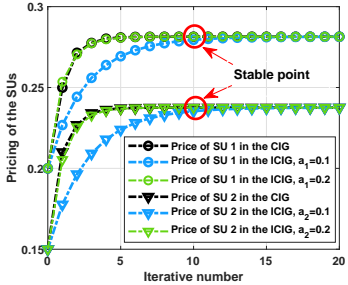} 
  \vspace{-2 mm}
  \caption{The convergence to the NE price.}
\end{figure}
\vspace{-3 mm}

From Fig. 1, one can see that \textbf{Algorithm 1} requires only a few iterations to converge to the NE in the CIG. Whereas, in the ICIG, the convergence speed depends largely on the learning rate $a_i$ in eq. \eqref{GDM}. When the learning rate is properly set, e.g., $a_1=a_2=0.2$, the ICIG converges to the NE as fast as the CIG. In addition, we note that SU 2 has a price advantage over SU 1, because SU 2 is with more idle computing resources than SU 1 in the current slot.

Next, Fig. 2 and Fig. 3 respectively show the convergence of the amount of offloaded data from the DU to the SUs and the convergence of their utilities in the ICIG.
\vspace{-4 mm}
\begin{figure}[htbp]
  \begin{minipage}[t]{0.48\linewidth}
  \centering
  \includegraphics[scale=0.43]{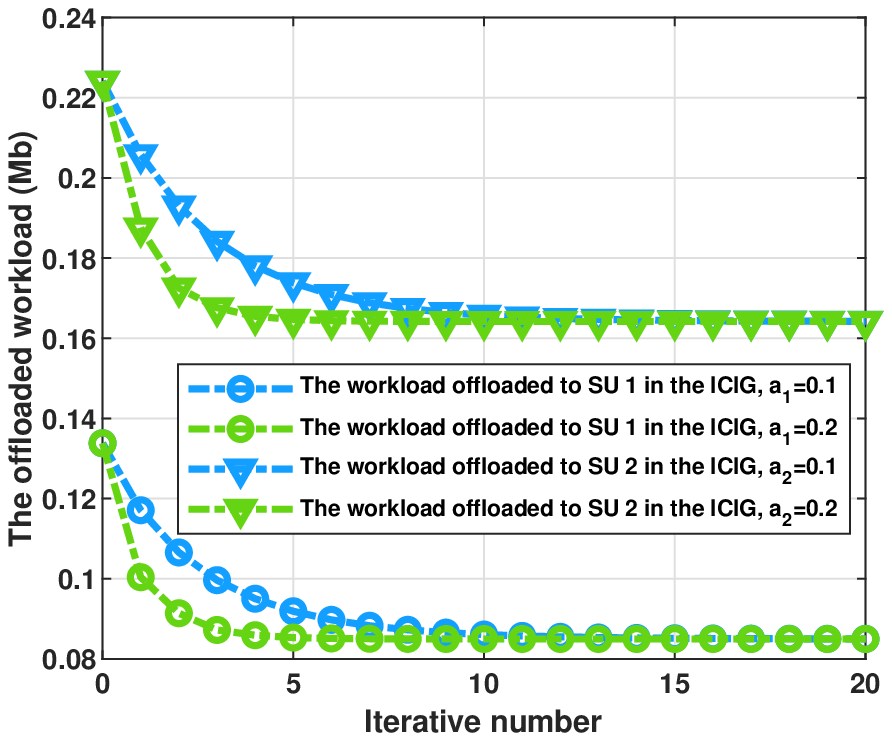}
  \vspace{-7mm}
  \caption{The convergence of the amount of offloaded workload.}
  \label{}
  \end{minipage}%
  \hspace{2mm}
  \begin{minipage}[t]{0.48\linewidth}
  \centering
  \includegraphics[scale=0.43]{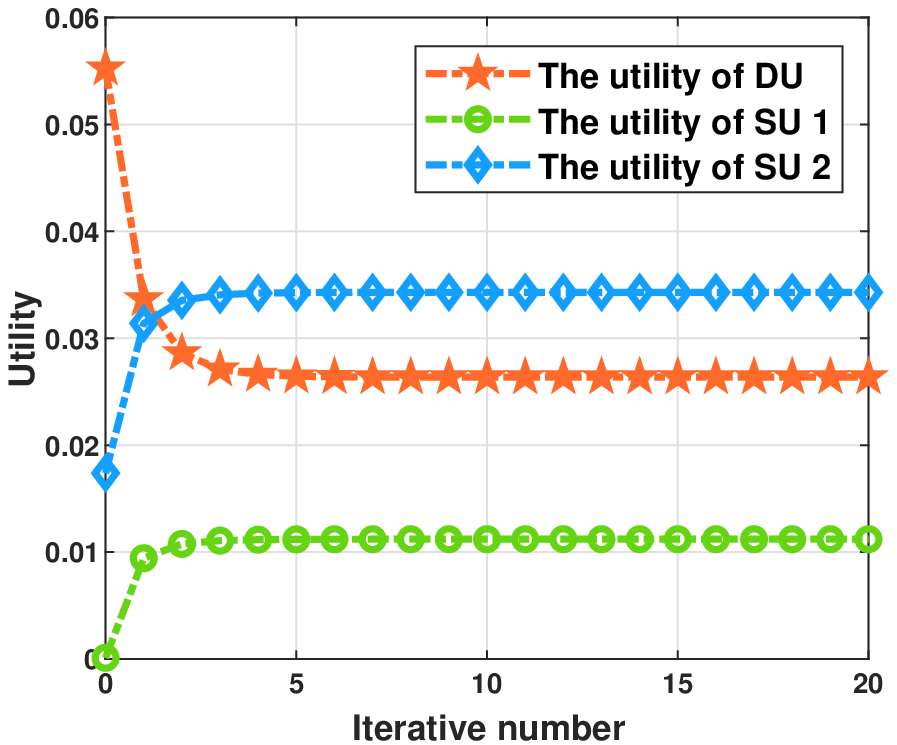}
  \vspace{-7mm}
  \caption{The convergence of user utilities.}
  \end{minipage}
  \end{figure}
\vspace{-2 mm}

From Fig. 2, one can see that SU 2 accepts more computing tasks of the DU than SU 1 in the game. This indicates that the proposed pricing game can coordinate the resource supply of the SUs according to their current available resources. From Fig. 3, one can see that all the DU and SUs obtain positive utilities from the game. Since SU 2 sells more resources than SU 1, it obtains a higher utility than SU 1. It implies that the proposed game provides sufficient motivation to rational users to participate in cooperative computation.

Finally, we simulate a system consisting of one DU and three SUs. The coordinates of the DU and SUs are $(0,0)$, $(-20,20)$, $(20,20)$ and $(20,-20)$, respectively. The amount of task data of SUs 1 and 2 are $L_1=0.15$ Mb and $L_2=0.1$ Mb, respectively. We increase the workload of SU 3 from $0$ Mb to $0.15$ Mb at a step of $0.05$ Mb. Fig. 4 shows the variation of the amount of task data offloaded from the DU to the SUs.

\vspace{-4 mm}
\begin{figure}[H]
  \centering
  \includegraphics[scale=.45]{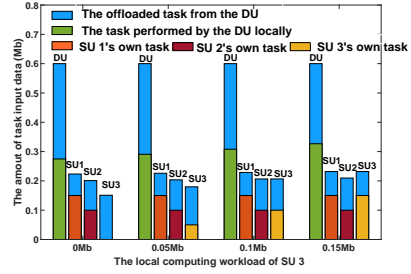} 
  \vspace{-2 mm}
  \caption{The workload distribution on the users.}
\end{figure}
\vspace{-3 mm}

From Fig. 4, one can see that with the increase of the workload of SU 3, the amount of task data offloaded from the DU to SU 3 decreases gradually. At the same time, the workloads offloaded from the DU to SU 1 and SU 2 have a small increase. It indicates that the proposed game can coordinate the resource provision of the SUs according to their current available resources.  

\section{Conclusion}
In this paper, an incentive scheme based on Bertrand game has been proposed to stimulate rational users to participate in cooperative computation framework. When the game is with \textit{complete information}, the NE of the game is obtained in closed-form, while in the case of \textit{incomplete information}, a distributed iterative algorithm has been developed to find the NE. Simulation results have verified the effectiveness of the proposed scheme.

\appendices
\section{Derivation of Function (13)}
A Maclaurin series is a function which has expansion series that gives the sum of derivatives of that function. In order to obtain the analytic solution of problem (11), we derive the Maclaurin series of the second term of function (9) up to order $n=2$. Then function (9) is rewritten as
\begin{small}
\begin{align}
    &U_0=\kappa_0 (f_0^\text{max} )^2 C_0 \sum\limits_{n\in \mathcal{N}}l_0^n -\frac{\sigma^2T/{\left|\mathcal{N}\right|}}{g_0^n}\sum\limits_{n\in \mathcal{N}} \left(l_0^n\ln{2^{\frac{1}{BT/{\left|\mathcal{N}\right|}}}} \right. \notag\\
   &\left. +\frac{1}{2}\left(l_0^n\ln{2^{\frac{1}{BT/{\left|\mathcal{N}\right|}}}}\right)^2\right) -\frac{1}{2}\left(\sum\limits_{n\in \mathcal{N}}\left({l_0^n}\right)^2+2v\sum_{n\neq k}l_0^nl_0^k\right) \notag\\
    &-\sum\limits_{n\in \mathcal{N}}q_nl_0^n. \tag{S.1}
    \label{Uu0}
\end{align}
\end{small}

After some algebraic manipulation, we can further rewrite \eqref{Uu0} as
\begin{align}
    U_0=-\frac{1}{2}\left(\sum\limits_{n\in \mathcal{N}}\left(\frac{H_2}{g_0^n}+1\right)\left(l_0^n\right)^2+2v\sum\nolimits_{n\neq k}l_0^nl_0^k\right) \notag \\
    +\sum\nolimits_{n\in \mathcal{N}}\left(A-\frac{H_1}{g_0^n}-q_n\right)l_0^n, \tag{S.2}
\end{align}
where $A=\kappa_0 (f_0^\text{max} )^2C_0$, $H_1=\ln{2^{\frac{1}{BT/{\left|\mathcal{N}\right|}}}}\sigma^2T/{\left|\mathcal{N}\right|}$, $H_2=(\ln{2^{\frac{1}{BT/{\left|\mathcal{N}\right|}}}})^2\sigma^2T/{\left|\mathcal{N}\right|}$ are constants. 

% you can choose not to have a title for an appendix
% if you want by leaving the argument blank

% use section* for acknowledgment
\section{Proof of Lemma 1}
We take the first order derivative of $U_n$ in eq. (15) with respect to $q_n$, and we have
\begin{align}
    &\frac{\partial{U_n}}{\partial{q_n}}={\hat{l_0^{n}}}+q_n\frac{\partial{{\hat{l_0^{n}}}}}{\partial{q_n}}-3F_n\left(L_n+{\hat{l_0^{n}}}\right)^2\frac{\partial{{\hat{l_0^{n}}}}}{\partial{q_n}}, \tag{S.3}
\end{align}
where $F_n=\frac{\kappa_nC_n^3}{T^2}$.

Since ${\hat{l_0^{n}}}=\alpha_n-\beta_nq_n$ is considered, we have
\begin{align}
    &\frac{\partial{U_n}}{\partial{q_n}}={\hat{l_0^{n}}}-q_n\beta_n+3F_n\beta_n\left(L_n+{\hat{l_0^{n}}}\right)^2. \tag{S.4}
\end{align}

Next, we further derive the second order derivative of eq. (15) with respect to $q_n$ as
\begin{align}
    &\frac{\partial^2{U_n}}{\partial{q_n^2}}=-2\beta_n-6F_n\beta_n^2\left(L_n+{\hat{l_0^{n}}}\right). \tag{S.5}
\end{align}

Since $F_n>0$, $\frac{\partial{{\hat{l_0^{n}}}}}{\partial{q_n}}=-\beta_n< 0$, $\frac{\partial^2{{\hat{l_0^{n}}}}}{\partial{q_n^2}}=0$, and ${\hat{l_0^{n}}}\ge 0$, we can obtain $\frac{\partial^2{U_n}}{\partial{q_n^2}} < 0$. Therefore, we conclude that the utility function $U_n$ of SU $n$ is concave with respect to $q_n$.

\section{Derivation of Equation (18)}
Solving function $\frac{\partial{U_n}}{\partial{q_n}}=0$ is equivalent to solving the following quadratic equation
\begin{align}
    &\alpha_n-2\beta_nq_n+3F_n\beta_n(L_n+\alpha_n-\beta_nq_n)^2=0. \tag{S.6}
    \label{partial}
\end{align}

By solving eq. \eqref{partial}, one can get the following two solutions, i.e., 
\begin{small}
\begin{align}
    &q_n=\mu_{n,1}=\frac{3L_nF_n\beta_n+3F_n\alpha_n\beta_n+1+\sqrt{6F_nL_n\beta_n+3F_n\alpha_n\beta_n+1}}{3F_n\beta_n^2} \tag{S.7}
\end{align}
\end{small}
and
\begin{small}
\begin{align}
    &q_n=\mu_{n,2}=\frac{3L_nF_n\beta_n+3F_n\alpha_n\beta_n+1-\sqrt{6F_nL_n\beta_n+3F_n\alpha_n\beta_n+1}}{3F_n\beta_n^2} \tag{S.8} 
\end{align}
\end{small}

From the first solution, we can derive $\mu_{n,1}>\frac{\alpha_n}{\beta_n}$. It means that this solution is beyond the range of feasible prices, resulting in the amount of input data of the offloaded task being 0 and, hence, $U_n=0$. So equation \eqref{partial} has the unique solution $q_n=\mu_{n,2}$. For easy expression, we define $\mu_n=\mu_{n,2}$. Then we have $q_n=\mu_n$.

Now, we determine the solution of problem (15) according to the feasible price range where $\mu_n$ is located.
\begin{itemize}
    \item If $\mu_n<\frac{\alpha_n-Q_n}{\beta_n}$, $U_n$ is monotonically decreasing with respect to the $q_n$ over the feasible price range. So the optimal price is $\hat{q_n}=\frac{\alpha_n-Q_n}{\beta_n}$.
    \item If $\frac{\alpha_n-Q_n}{\beta_n} \leq \mu_n \leq \frac{\alpha_n}{\beta_n} $ , the optimal price is $\hat{q_n}=\mu_n$.
    \item If $\mu_n > \frac{\alpha_n}{\beta_n}$, $U_n$ is monotonically increasing respect to $q_n$ over the feasible price range. So the optimal price is $\hat{q_n}=\frac{\alpha_n}{\beta_n}.$
\end{itemize}

Finally, the solution of problem (15) is given as 

\begin{align}
\hat{q_n}=\left\{
\begin{array}{rcl}
\frac{\alpha_n-Q_n}{\beta_n}     &    & {0\leq \mu_n< \frac{\alpha_n-Q_n}{\beta_n}}\\
\mu_n  &      & {\frac{\alpha_n-Q_n}{\beta_n} \leq \mu_n \leq \frac{\alpha_n}{\beta_n}}\\
\frac{\alpha_n}{\beta_n}    &      & {\mu_n > \frac{\alpha_n}{\beta_n}}
\end{array} \right. \tag{S.9}
\end{align}

\section{Proof of Lemma 2}
According to eq. (18), we can obtain the self-mapping function of SU $n$ as
\begin{equation}
q_n\left[i+1\right]=\left\{
\begin{array}{rcl}
\frac{\alpha_n\left[i\right]-Q_n}{\beta_n}     &    & {0\leq \mu_n\left[i\right]< \frac{\alpha_n\left[i\right]-Q_n}{\beta_n}}\\
\mu_n\left[i\right]  &      & {\frac{\alpha_n\left[i\right]-Q_n}{\beta_n} \leq \mu_n\left[i\right] \leq \frac{\alpha_n\left[i\right]}{\beta_n}}\\
\frac{\alpha_n\left[i\right]}{\beta_n}    &      & {\mu_n\left[i\right] > \frac{\alpha_n\left[i\right]}{\beta_n}}
\end{array} \right. \tag{S.10}
\label{mapf}
\end{equation}
The Jacobian matrix of eq. \eqref{mapf} is given by

\begin{equation}       %开始数学环境
\mathbf J=\left[                 %左括号
  \begin{array}{cc}   %该矩阵一共2列，每一列都居中放置
    J_{1,1} & J_{1,2} \\  %第一行元素
    J_{2,1} & J_{2,2} \\  %第二行元素
  \end{array}
\right] =\left[                 %左括号
  \begin{array}{cc}   %该矩阵一共2列，每一列都居中放置
    \frac{\partial{q_1[i+1]}}{\partial{q_1[i]}} & \frac{\partial{q_1[i+1]}}{\partial{q_2[i]}} \\  %第一行元素
    \frac{\partial{q_2[i+1]}}{\partial{q_1[i]}} & \frac{\partial{q_2[i+1]}}{\partial{q_2[i]}} \\  %第二行元素
  \end{array}
\right].                        \tag{S.11}
\end{equation}

By definition, the self-mapping function \eqref{mapf} is stable if and only if the eigenvalues $\lambda_n$ of $\mathbf J$ are all inside the unit circle of the complex plane, i.e., $\left|\lambda_n\right| < 1$.

Next, we derive the expression of the elements of $\mathbf J$ as
\begin{equation}
    J_{1,1}=J_{2,2}=\frac{\partial{q_n[i+1]}}{\partial{q_n[i]}}=0, \tag{S.12}
\end{equation}
\begin{small}
\begin{align}
J_{1,2}&=\frac{\partial{q_1[i+1]}}{\partial{q_2[i]}} \notag \\
&=\left\{
\begin{array}{rcl}
\frac{1}{\beta_1} \frac{\partial \alpha_1}{\partial q_2}    &    & {0\leq \mu_1\left[i\right]< \frac{\alpha_1\left[i\right]-Q_1}{\beta_1}}\\
\left(1-\frac{1}{2\sqrt{\zeta_1}}\right)\frac{1}{\beta_1} \frac{\partial \alpha_1}{\partial q_2}  &      & {\frac{\alpha_1\left[i\right]-Q_1}{\beta_1} \leq \mu_1\left[i\right] \leq \frac{\alpha_1\left[i\right]}{\beta_1}}\\
\frac{1}{\beta_1} \frac{\partial \alpha_1}{\partial q_2}   &      & {\mu_1\left[i\right] \geq \frac{\alpha_1\left[i\right]}{\beta_1}}
\end{array} \right.\tag{S.13}
\end{align}
\end{small}
and
\begin{small}
\begin{align}
J_{2,1}&=\frac{\partial{q_2[i+1]}}{\partial{q_1[i]}} \notag \\
&=\left\{
\begin{array}{rcl}
\frac{1}{\beta_2} \frac{\partial \alpha_2}{\partial q_1}    &    & {0\leq \mu_2\left[i\right]< \frac{\alpha_2\left[i\right]-Q_2}{\beta_2}}\\
\left(1-\frac{1}{2\sqrt{\zeta_2}}\right)\frac{1}{\beta_2} \frac{\partial \alpha_2}{\partial q_1}  &      & {\frac{\alpha_2\left[i\right]-Q_2}{\beta_2} \leq \mu_2\left[i\right] \leq \frac{\alpha_2\left[i\right]}{\beta_2}}\\
\frac{1}{\beta_2} \frac{\partial \alpha_2}{\partial q_1}   &      & {\mu_2\left[i\right] \geq \frac{\alpha_2\left[i\right]}{\beta_2}}
\end{array} \right.\tag{S.14}
\end{align}
\end{small}
In eq. (S.13) and eq. (S.14), we have
\begin{align}
\zeta_1=6L_1F_1\beta_1+3F_1\alpha_1\beta_1+1, \tag{S.15} \\
\zeta_2=6L_2F_2\beta_2+3F_2\alpha_2\beta_2+1, \tag{S.16} \\
\frac{1}{\beta_1} \frac{\partial \alpha_1}{\partial q_2}=\frac{\frac{v}{H_2/g_0^2-v+1}}{\frac{v}{H_2/g_0^2-v+1}+1}<1  \tag{S.17}
\end{align}
and
\begin{align}
\frac{1}{\beta_2} \frac{\partial \alpha_2}{\partial q_1}=\frac{\frac{v}{H_2/g_0^1-v+1}}{\frac{v}{H_2/g_0^1-v+1}+1}<1. \tag{S.18}
\end{align}
Then, we know $J_{1,2}<1$ and $J_{2,1}<1$.

Since all the elements of $\mathbf J$ are less than one, the eigenvalues of $\mathbf J$ are all less than 1, which is given by
\begin{align}
    (\lambda_1,\lambda_2)=\frac{(J_{1,1}+J_{2,2})\pm \sqrt{4J_{1,2}J_{2,1}+(J_{1,1}-J_{2,2})^2}}{2}, \tag{S.19}
\end{align}
which indicates that the eigenvalues $\lambda_n$ are all inside the unit circle of the complex plane. Therefore, \textbf{Algorithm 1} can converge to a stable point.

% Can use something like this to put references on a page
% by themselves when using endfloat and the captionsoff option.
\ifCLASSOPTIONcaptionsoff
  \newpage
\fi

\bibliographystyle{IEEEtran}
\bibliography{IEEEabrv}

\end{document}